\def\hat{\widehat}
\def\bm{\mathbf}
\newcommand{\exclude}[1]{}
\newtheorem{lemma}{Lemma}
\newtheorem{alg}{Algorithm}
\newtheorem{proposition}{Proposition}
\tikzstyle{box} = [rectangle, rounded corners, minimum width=1cm, minimum height=0.5cm, text centered, text width=3cm, draw=black]
\tikzstyle{box2} = [rectangle, rounded corners, minimum width=1cm, minimum height=0.5cm, text centered, text width=2cm, draw=black]
\tikzstyle{arrow} = [thick,->,>=stealth]
\title{On the Asymptotics of Graph Cut Objectives for Experimental Designs of Network A/B Testing}
\author{Qiong Zhang\thanks{qiongz@clemson.edu}}
\affil{School of Mathematical and Statistical Sciences, Clemson University, SC, USA}
\begin{document}
\maketitle

\begin{abstract}
A/B testing is an effective way to assess the potential impacts of two treatments. For A/B tests conducted by IT companies, the test users of A/B testing are often connected and form a social network. The responses of A/B testing can be related to the network connection of test users. This paper discusses the relationship between the design criteria of network A/B testing and graph cut objectives. We develop asymptotic distributions of graph cut objectives to enable rerandomization algorithms for the design of network A/B testing under two scenarios. 
\end{abstract}

\paragraph{Keywords:} Network-correlated responses; Network interference; Design of controlled experiments; Optimal design.

\section{Introduction}

IT companies such as Facebook and LinkedIn frequently conduct controlled experiments to evaluate the performance of two versions (e.g., A and B) of products, features, services, etc. This task is an experimental design problem that requires assigning the users to one of the two versions and collecting their responses for evaluation \citep{larsen2023statistical}. Often, the users are connected through the apps and form a social network.  We refer to network A/B testing as the case that the users participating in A/B testing experiments are connected in a social network, and the social connection may imply the potential dependence between connected users. Therefore, network A/B testing design requires assigning users to A or B version according to their network connection with other users,
some examples of recent works including 
\cite{gui2015network, parker2017optimal, basse2018model, pokhilko2019doptimal, zhang2022locally}.   In the literature of graph theory and optimization \citep{ben2001lectures,gross2005graph}, this problem is related to cutting the graph into two partitions with respect to some objectives.
In this section, we first provide an overview of the graph cut problem and then connect it with the experimental objectives of network A/B testing.

\subsection{Graph Cut Problems}

Consider an undirected and unweighted graph with $n$ vertexes, each representing a user in the social network given by the graph. We can express the connection between two vertexes by an $n\times n$ adjacency matrix $W=\{w_{ij}\}$ whose $(i,j)$-th entry is $w_{ij}$. 
The diagonal entries $w_{ii}$'s of this matrix are loaded with zeros, whereas the off-diagonal entries are
\begin{equation}
w_{ij}=\begin{cases}
1, \quad\text{if there is an edge between vertexes $i$ and $j$}\\
0,  \quad\text{otherwise}.
\end{cases}
\end{equation} 
Let $\bm x=(x_1, \ldots, x_n)^\top\in \{-1, 1\}^n$ be the assignments of two options to the $n$ users. This is equivalent to cutting the graph into two disjoint subsets, each with the users assigned to one of the two options, respectively. If there is a cut between two connected users $i$ and $j$ (i.e., $w_{ij}=1$), then $x_i$ and $x_j$ take different values 1 and -1.  

A graph cut is minimum if the edges across two subsets resulting from the cut are minimized \citep{gross2005graph}. Equivalently, two connected users are more likely to receive the same treatment. This problem can be formulated by
\begin{equation}\label{eq:mincut}
    \mathrm{max}_{\bm x\in \{-1, 1\}^n}w_{ij}x_ix_j,~~\mathrm{s.t.}~~-n+1\leq\sum^n_{i=1}x_i\leq n-1.
\end{equation}
By maximizing the objective, the solution to this problem tends to assign the same value to $x_i$ and $x_j$ if $i$
and $j$ are connected. The constraint $-n+1\leq\sum^n_{i=1}x_i\leq n-1$ rules out the situation that all $x_i$'s are assigned with 1 or -1 as a trivial maximum of the objective. The value of $\sum^n_{i=1}x_i$ can be constrained to be zero or in a small interval containing zero if the sizes of the two sub-graphs are required to be relatively the same. The minimum cut problem in \eqref{eq:mincut} is polynomial-time solvable \citep{lawler2001combinatorial}.

A graph cut is maximum if the edges across two subsets are maximized \citep{ben2001lectures}, which is equivalent to 
\begin{equation}\label{eq:maxcut}
    \mathrm{min}_{\bm x\in \{-1, 1\}^n}w_{ij}x_ix_j.
\end{equation}
The solution of this problem tends to assign opposite signs to connected vertexes $i$ and $j$. The graph cut problem is related to experimental designs for network A/B testing. We describe the connections in the following section.

\subsection{Optimal Designs in Network A/B Testing}

Assume that the users' responses are modeled by
\begin{equation}\label{eq:lm}
y_i=\alpha+x_i\beta+\delta_i,
\end{equation}
where $\alpha$ is the intercept, $\beta$ represents the treatment effect, and $\delta_i$ represents the network effect. Next, we describe
two common scenarios of the network effect models.

\paragraph{Scenario I: Network Correlated Responses.} Under this scenario, 
two connected users are assumed to share common features. Thus, the responses of two connected users are correlated due to their common features, examples include \cite{basse2018model, pokhilko2019doptimal, zhang2022locally}. Let $\boldsymbol{\delta}=\{\delta_1, \ldots, \delta_n\}^\top$ with $\delta_i$ be the error term from \eqref{eq:lm}. Following the assumption in 
 \citep{zhang2022locally},
we have that
\begin{equation}\label{eq:mvn}
    \boldsymbol{\delta}\sim \mathcal{MVN}_n(0, \sigma^2 R(W, \rho)^{-1}),
    \end{equation}
with $\sigma^2$ being the variance parameter, $\rho$ being the correlation parameter that characterizes the strength of the correlation between responses of connected users. 
An example of $R(W, \rho)$ in \eqref{eq:mvn} is the conditional auto-regressive model \citep{besag1974spatial} with
\begin{equation}\label{eq:car}
R(W,\rho)=\mathrm{diag}\left(d_1, \ldots, d_n\right)-\rho W
\end{equation}
where $d_i=\sum^n_{j=1}w_{ij}$ is the degree of the $i$-th vertex  for $i=1,\ldots, n$.
According to \cite{pokhilko2019doptimal}, given the value of $\rho$,  the variance of the weighted least squared estimator of the treatment effect $\hat\beta$ is 
\begin{equation}\label{eq:obj1}
\mathrm{Var}\left(\hat\beta\right)=\sigma^2\left(
\sum_{i,j}w_{i,j}-\rho\sum_{i,j}w_{i,j}x_ix_j
-\frac{(1-\rho)\left(\sum^n_{i=1} d_i x_i\right)^2}{\sum_{i,j}w_{i,j}}\right)^{-1}.  
\end{equation}
For a given network $W$, a lower bound of \eqref{eq:obj1} can be expressed by
\begin{equation}\label{eq:ideal1}
\mathrm{Var}\left(\hat\beta\right)\geq \frac{\sigma^2}{(1+\rho)}\left\{\sum_{i,j}w_{i,j}\right\}^{-1}.
\end{equation}
As noted by \cite{pokhilko2019doptimal}, this lower bound is attained if  $\sum_{i,j} w_{ij} x_i x_j=-\sum_{i,j} w_{ij}$ and 
$\sum^n_{i=1} d_i x_i=0$ hold exactly. The first condition 
$\sum_{i,j} w_{ij} x_i x_j=-\sum_{i,j} w_{ij}$ requires that any pair of connected users are assigned with different treatments, whereas the second condition $\sum^n_{i=1} d_i x_i=0$ requires that the treatment allocation is stratified with respect to the degrees of the vertexes.
It is obvious that this lower bound can not be exactly attained for all the networks. 
To reduce the variance of the estimated treatment effect, it is desired to allocate design points to produce smaller values of $\sum_{i,j} w_{ij} x_i x_j$
and $ \left(\sum^n_{i=1} d_i x_i\right)^2$. Also, $|\sum^n_{i=1}x_i|\leq 1$ are required sometimes to ensure that the design is balanced over the two treatments.

\paragraph{Scenario II: Network Interference.} Under this scenario, the users' responses are affected by the design allocation of their connected users. A commonly used model (e.g., \cite{gui2015network,parker2017optimal}) for $\delta_i$ is
\begin{equation}\label{eq:inter}
\delta_i=\sum^n_{j=1}w_{ij}\alpha+\left(\sum^n_{j=1} w_{ij}x_j\right)\gamma +\varepsilon_i,
\end{equation}
where $\alpha$ and $\gamma$ are unknown parameters, and $\varepsilon_i$ is an independent random error with mean zero and variance $\sigma^2$. Under this model assumption, the variance of the least squared estimator of $\beta$ is 
\begin{equation}\label{eq:obj2}
\mathrm{Var}\left(\hat{ \beta}\right)=\sigma^2\left\{{\bm x}^\top\left(I-F_n(F^\top_n F_n)^{-1}F^\top_n\right) {\bm x}\right\}^{-1}
\end{equation}
where $ F_n=\left[
\bm 1_n ~ W\bm 1_n ~ W{\bm x}
 \right]$
is an $n\times 3$ matrix and $\bm 1_n$ is an $n$ dimensional vector loaded with ones. It is obvious that 
\begin{equation}\label{eq:ideal2}
\mathrm{Var}\left(\hat{ \beta}\right)\geq \frac{\sigma^2}{n}.
\end{equation}
The objective $\mathrm{Var}\left(\hat{ \beta}\right)$ is minimized if 
${\bm x}^\top F_n(F^\top_n F_n)^{-1}F^\top_n {\bm x}=0$. Then, if $n$ is even,  the sufficient condition to minimize $\mathrm{Var}\left(\hat{ \beta}\right)$ is that
\[
\sum^n_{i=1}x_i=0,\quad
\sum^n_{i=1}d_{i}x_i=0\quad
\mathrm{and}\quad
\sum^n_{i=1}\sum^n_{j=1}w_{ij}x_i x_j=0
\]
Through the two examples under the two scenarios, we see that the design criteria for network A/B testing often contain three components:
\begin{align}
\bm x^\top W \bm x&=\sum^n_{i=1}\sum^n_{j=1} w_{ij}x_ix_j,\label{eq:g}\\
\bm x^\top W \bm 1_n&=\sum^n_{i=1}d_{i}x_i,\label{eq:f1}\\
\bm x^\top \bm 1_n&=\sum^n_{i=1}x_i.\label{eq:f2}
\end{align}
If $|\bm x^\top \bm 1_n|\leq 1$, the design $\bm x$ is balanced over the two treatments. If $\bm x^\top W \bm 1_n=0$, the design $\bm x$ is balanced with respect to the degrees of the network connection of all $n$ users.
Therefore, $|\bm x^\top \bm 1_n|\leq 1$ and $\bm x^\top W \bm 1_n=0$ give two balanced constraints of the designs. 
The objective $\bm x^\top W \bm x$ is minimized under the scenario of network correlated responses, which is the objective of the max-cut problem in \eqref{eq:maxcut}. As noted by \cite{pokhilko2019doptimal}, the ideal case of the optimal design is given by solving the max-cut problem with the two balanced constraints:
\begin{align*}
\mathrm{min} &\quad \bm x^\top W \bm x\\
\mathrm{s.t.}& \quad -\delta_1\leq\bm x^\top W \bm 1_n\leq \delta_1\\
& \quad -\delta_2\leq \bm x^\top \bm 1_n\leq \delta_2\\
&\quad \bm x\in\{-1, 1\}^n,
\end{align*}
for $\delta_1>0$ and $\delta_2>0$, where $\delta_2$ can be set to be one if an exact balance over the two treatments is required.  
The objective $\bm x^\top W \bm x$ should be close to zero under Scenario II. This problem can not be directly solved as a  
 min-cut or a max-cut problem. A visualization of the ideal cases of the optimal designs is given by Figure \ref{fig: Visualization}. The network contains 24 users, and the users form 12 pairs, with each pair of users connected. The two colors denote the allocation of two treatments. 

 \begin{figure}[!h]
\centering
\includegraphics[scale=0.4]{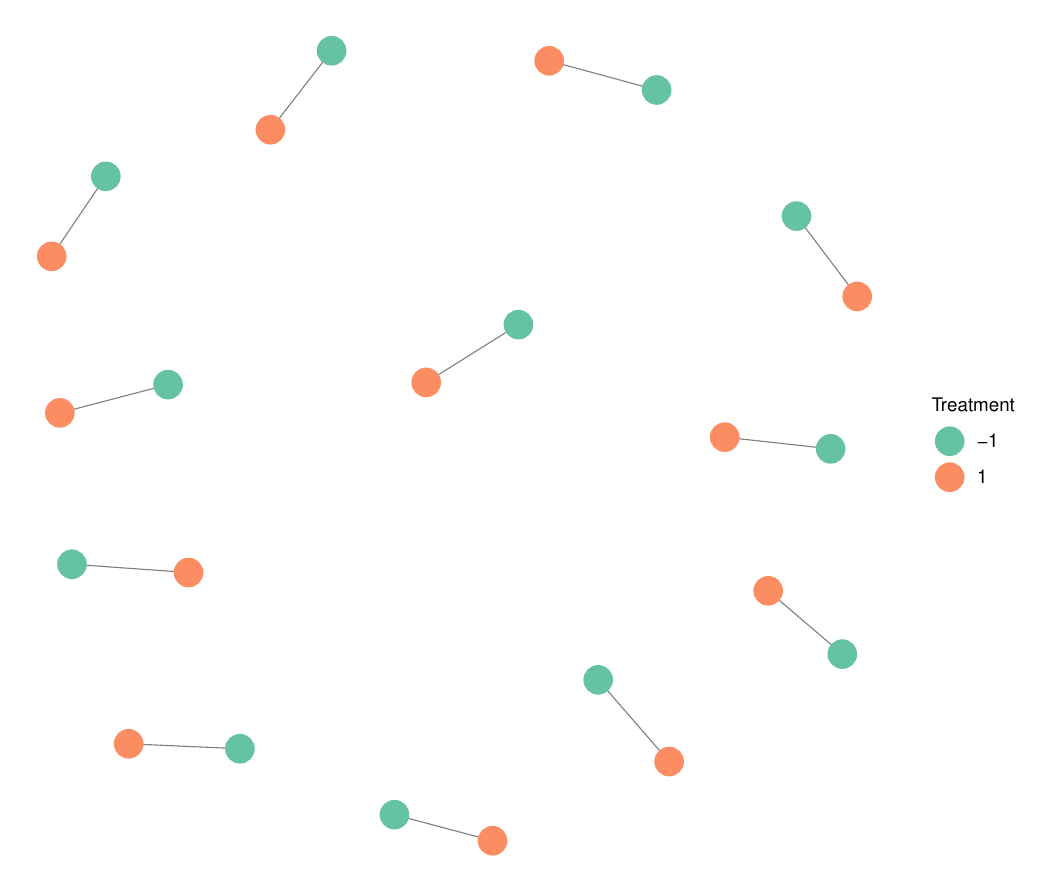}
\includegraphics[scale=0.4]{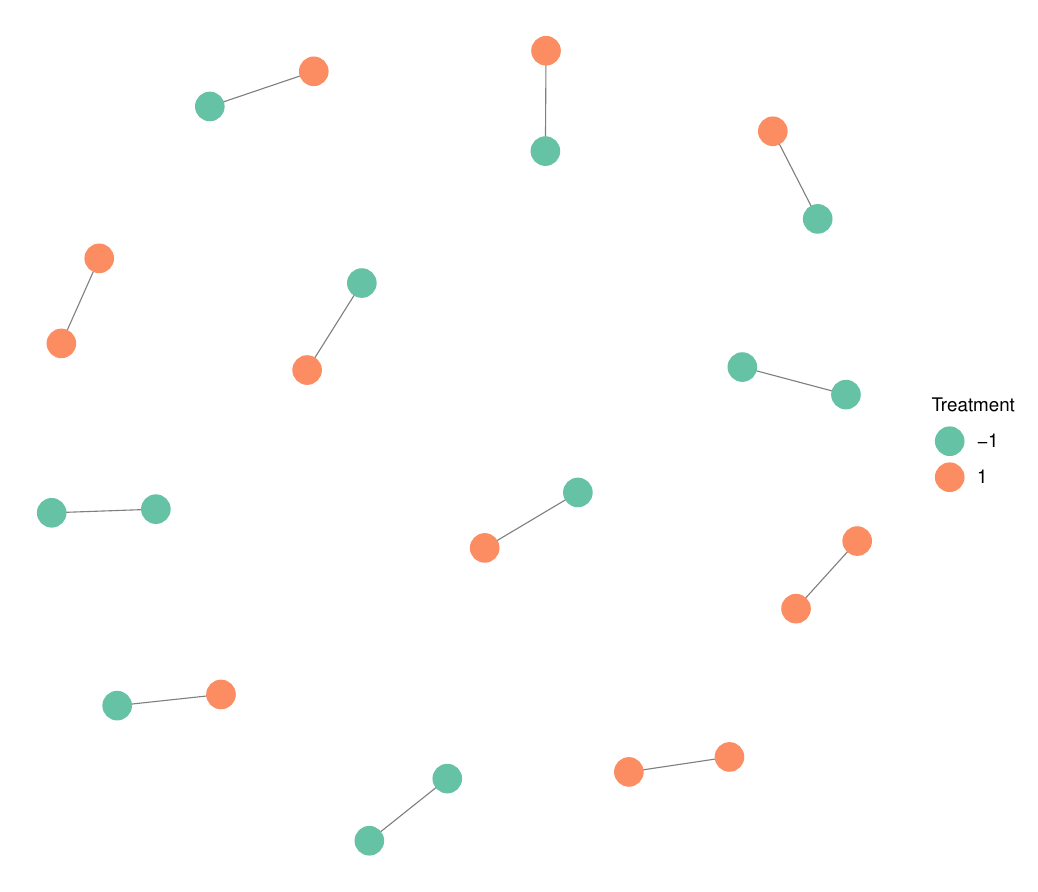}
\caption{The optimal designs of A/B testing under Scenario I (Left) and Scenario II (Right) for a network containing 24 users.}\label{fig: Visualization}
\end{figure}

We observe that the left of Figure \ref{fig: Visualization} shows an optimal design under Scenario I, which allocates different treatments for each pair of connected users to attain the minimized value of the objective in \eqref{eq:g}. The right of Figure \ref{fig: Visualization} shows
 an optimal design under Scenario II. The resulting design contains three pairs of users allocated with treatment 1, three pairs allocated with treatment -1, and six allocated with different treatments. Therefore, the value of the objective in \eqref{eq:g} attains zero exactly. For an arbitrary network, it is not guaranteed that the ideal optimal values in \eqref{eq:ideal1} and
 \eqref{eq:ideal2} can be attained. Also, obtaining the exact optimal design for large social networks without randomization can be inefficient in computation. Also, exact optimal design without sufficient randomization can cause robustness concerns in statistical inference (e.g., \cite{morgan2012rerandomization}).
Next, we propose an algorithm to obtain random designs
that reduces the variance of estimated treatment effects under each scenario.

\section{A Random Design Algorithm for Network A/B Testing}

Random designs with a certain amount of variance reduction can often be obtained by rerandomization until satisfying some stopping criteria (e.g., \cite{morgan2012rerandomization, li2017general}). Let
\begin{equation}\label{eq:gcon}
g(\bm x; W, \delta_1, \delta_2)=\begin{cases}
1 \quad \mathrm{if} \quad -\delta_1\leq\bm x^\top W \bm 1_n\leq \delta_1 \quad \mathrm{and}\quad-\delta_2\leq \bm x^\top \bm 1_n\leq \delta_2\\
0 \quad \mathrm{o.w.}
\end{cases},
\end{equation}
for $\delta_1>0$ and $\delta_2>0$. The value of this function indicates whether or not the two balanced constraints are met. 
For Scenarios I \& II, 
we denote the stopping rule associated with the graph cut objectives by 
\begin{equation}\label{eq:phi}
 \phi_1(\bm x; W, c)=
 \begin{cases}
 1 \quad \mathrm{if} \quad \bm x^\top W \bm x\leq c \\
 0 \quad \mathrm{o.w.}
 \end{cases}
 \quad\mathrm{and}\quad
 \phi_2(\bm x; W, c)=
 \begin{cases}
 1 \quad \mathrm{if} \quad |\bm x^\top W \bm x|\leq c \\
 0 \quad \mathrm{o.w.}
 \end{cases},
\end{equation}
respectively.

For given values of $\delta_1$, $\delta_2$, and $c$, we can obtain the random design and check the values of 
$g(\bm x; W, \delta_1, \delta_2)$ and $\phi_1(\bm x; W, c)$ (or $\phi_2(\bm x; W, c)$) in a sequence to form a rerandomization algorithm \citep{pokhilko2019statistical}.
For a stopping rule containing two or more criteria, it can be challenging to investigate the efficiency of the algorithm and control the running time.
Therefore, we first propose an algorithm that can generate
random designs $\bm x$  under the condition that $g(\bm x; W, \delta_1, \delta_2)=1$ for some $\delta_1$ and $\delta_2$.

\begin{alg}\label{alg:rand}
We obtain a random design $\bm x$ of size $n$ for a given network $W$ with degrees $d_i$'s. 
\begin{itemize}
\item[\bf Step 1:]  Define 
\begin{equation}\label{eq:randomize}
\tilde d_i=d_i+u_i,\quad\mathrm{with}\quad u_i\sim\mathrm{i.i.d.}\quad U(0, 1).
\end{equation} 
Denote the rank of $\tilde{d}_i$ by $r_i$, which is taking value from 1 to $n$. 
\item[\bf Step 2:] Denote
\[
c_i=\begin{cases}
\left\lfloor \frac{r_i}{2}\right\rfloor &  \mathrm{if}\quad n\quad\mathrm{is~even}\\
\left\lfloor \frac{r_i-1}{2}\right\rfloor
& \mathrm{if}\quad n\quad\mathrm{is~odd}
\end{cases},
\]
where $\lfloor r\rfloor$ takes the largest integer that is smaller or equal to $r$.
Then the vertexes are divided into $n/2$ or $(n+1)/2$ groups, each sharing a common value of $c_i$ and with size one or two. 
\item[\bf Step 3:] For the groups given by Step 2, we randomly shuffle  $\{-1, -1\}$ within each group independently to assign their corresponding design values for the vertexes with $c_i>0$. If $n$ is odd, there will be one group containing a single vertex with $c_i=0$. We randomly assigned with 1 or -1 to the user associated with this vertex. 
\end{itemize}

\end{alg}
The purpose of step 1 is to make sure that there are no ties in the rank of the vertexes, and in the meantime, the random variables $u_i$'s in step 1 provide extra randomness on the resulting design $\bm x$.  
The aim of step 2 and step 3 is to balance over the degrees of different vertexes. 
As a result of utilizing this algorithm, $|\bm x^\top \bm 1_n|$ will be set as zero or one exactly and $\bm x^\top W \bm 1_n$ will be set close to zero by generating a vector $\bm x$ that balancing the degrees. We state Proposition \ref{prop:balance} below to formally demonstrate that the random design $\bm x$ given by Algorithm \ref{alg:rand} satisfies the balanced constraints in \eqref{eq:gcon} for some $\delta_1$ and $\delta_2$. 

\begin{proposition}\label{prop:balance}
Algorithm \ref{alg:rand} leads  a random design $\bm x$ that satisfying $g(\bm x; W, \delta_1, \delta_2)=1$ for any $\delta_1\geq c(W)$ and any $\delta_2\geq 1$, where
\[
c(W)=\begin{cases}
\sum^{n/2}_{i=1}\left(d_{(2i+1)}-d_{(2i)}\right) & \mathrm{if}\quad n\quad\mathrm{is~even}\\
d_{(1)}+\sum^{(n-1)/2}_{i=1}\left(d_{(2i+1)}-d_{(2i)}\right)
& \mathrm{if}\quad n\quad\mathrm{is~odd}
\end{cases},
\]
with $d_{(i)}$'s be the ordered degrees based on the rank $r_i$'s in \eqref{eq:randomize}.
The value of $c(W)$ is a constant given the network adjacency matrix $W$.
\end{proposition}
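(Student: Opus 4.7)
The plan is to directly verify the two inequalities defining $\{g(\bm x;W,\delta_1,\delta_2)=1\}$. Note first that the continuous perturbation $u_i\sim U(0,1)$ in \eqref{eq:randomize} makes the $\tilde d_i$ pairwise distinct with probability one, so the ranks $r_i$ form a well-defined permutation of $\{1,\ldots,n\}$ and Step 2 yields a partition into $n/2$ pair groups when $n$ is even, or into $(n-1)/2$ pair groups together with one singleton group (the rank-one vertex, with $c_i=0$) when $n$ is odd.

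For the size-balance bound $|\bm x^\top\bm 1_n|\leq 1$, observe that by Step 3 every pair group receives exactly one assignment of $+1$ and one of $-1$, contributing $0$ to $\sum_i x_i$. Hence $\bm x^\top\bm 1_n=0$ when $n$ is even and $\bm x^\top\bm 1_n=\pm 1$ when $n$ is odd, so $|\bm x^\top\bm 1_n|\leq 1\leq\delta_2$ in either case.

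For the degree-balance bound $|\bm x^\top W\bm 1_n|=\left|\sum_i d_i x_i\right|\leq c(W)$, write $d_{(i)}$ for the degree of the vertex of rank $i$ and decompose the sum over the groups produced in Step 2. Because the ranks come from the strictly ordered $\tilde d_i$ (and the $d_i$ are integers), we have $d_{(i+1)}\geq d_{(i)}$ for every $i$. A pair group on two consecutive ranks $\{k,k+1\}$ satisfies $x_{(k)}=-x_{(k+1)}$ by Step 3, so its contribution equals $x_{(k+1)}\bigl(d_{(k+1)}-d_{(k)}\bigr)$, whose absolute value is the nonnegative gap $d_{(k+1)}-d_{(k)}$. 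A singleton group (present only when $n$ is odd, at rank $1$) contributes at most $d_{(1)}$ in absolute value. Applying the triangle inequality across the groups produces exactly the constant $c(W)$ in the statement, giving $|\bm x^\top W\bm 1_n|\leq c(W)\leq\delta_1$.

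The two bounds together imply $g(\bm x;W,\delta_1,\delta_2)=1$ whenever $\delta_1\geq c(W)$ and $\delta_2\geq 1$. The argument is essentially bookkeeping; the main care needed lies in the parity split, namely aligning which pairs (and the singleton, when $n$ is odd) appear in the group decomposition with the summation indices in the definition of $c(W)$. No deeper obstacle is expected, because the cancellation $x_{(k)}+x_{(k+1)}=0$ enforced by Step 3 is precisely what drives both inequalities.
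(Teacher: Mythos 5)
Your proposal is correct and follows essentially the same route as the paper's proof: decompose $\sum_i d_i x_i$ over the rank-based pair groups, use the within-pair cancellation $x_{(k)}+x_{(k+1)}=0$ to reduce each pair's contribution to $\pm\bigl(d_{(k+1)}-d_{(k)}\bigr)$, and apply the triangle inequality (plus the $d_{(1)}$ term for the odd-$n$ singleton), with the size-balance bound following immediately from the same pairing. Your explicit remarks that the uniform perturbation breaks ties almost surely and that care is needed in aligning the group decomposition with the summation indices of $c(W)$ are worthwhile additions, but they do not change the substance of the argument.
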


The random design $\bm x$ given by Algorithm \ref{alg:rand} meet the balanced criteria in $g(\bm x; W, \delta_1, \delta_2)$. By utilizing this algorithm, we can rerandomize designs $\bm x$ to obtain a random design satisfying one of the stopping rules in \eqref{eq:phi} for a given $c$. The proposed random design algorithm is described below. 

\begin{alg}\label{alg:design}
Let $T$ be the maximum number of randomizations. For $t\leq T$, we loop over the following two steps until the stopping rule in the second step is met. 
\begin{itemize}
\item[\bf Step 1:] Generate a random design $\bm x_t$ using Algorithm \ref{alg:rand}.
\item[\bf Step 2:] Compute $\bm x^\top_t W\bm x_t$. For Scenario I (or II),  stop the loop if $\phi_1(\bm x_t; W, c )=1$ (or $\phi_2(\bm x_t; W, c)=1$ for Scenario II) is satisfied. 
\end{itemize}
For $t<T$, return the design $\bm x_t$. For $t=T$, return the design  $\bm x_{t^\ast}$ with 
\[
t^\ast=\begin{cases}
\mathrm{argmin}_{t=1, \ldots, T} \bm x^\top_t W\bm x_t&\mathrm{for~Scenario~I}\\
\mathrm{argmin}_{t=1, \ldots, T} |\bm x^\top_t W\bm x_t|&\mathrm{for~Scenario~II}\\
\end{cases}
\]
\end{alg}

This algorithm is provided for a given threshold value $c$ for \eqref{eq:phi} and some specific values of $\delta_1$ and $\delta_2$ in Proposition \ref{prop:balance}. It is important to understand how small those values are compared with the distributions of the objectives in  \eqref{eq:f2} under random designs. 
In the next section, we propose some asymptotic results to support the investigation of their distributions with random designs.

\section{Asymptotic Results on the Graph Cut Objectives}

Proposition \ref{prop:balance} gives the sufficient lower bounds of $\delta_1$ and $\delta_2$ given by Algorithm \ref{alg:rand}. In practice, the value of $\bm x^\top W\bm 1_n$ given by this randomization algorithm can be smaller than $c(W)$. 
Although Proposition \ref{prop:balance} specifies some values of $\delta_1$ and $\delta_2$ that meet the balanced constraints in \eqref{eq:gcon}, it is also necessary to justify how small those values are compared with the results from complete random designs. First of all, the value of $|\bm x^\top \bm 1_n|$ given by Algorithm \ref{alg:rand} is taking the minimum possible value. Given that  $|\bm x^\top \bm 1_n|\leq 1$, we now develop the asymptotic distribution of $\bm x^\top W\bm 1_n$ to compare the lower bound of $\delta_1$ for justification purpose.

\begin{proposition}\label{prop:asy1}
    For a random allocation of $\bm x$ satisfying $|\bm x^\top \bm 1_n|\leq 1$, and   \[ 
   \frac{\mathrm{max}_{1\leq i\leq n}(d_i-\bar d)^2}{\sum^n_{i=1}(d_i-\bar d)^2}\rightarrow 0,\quad\mathrm{with}\quad \bar d=n^{-1}\sum^n_{i=1}d_i
   \]
   as $n\rightarrow\infty$,
    we have that
    \[
    \frac{\sum^n_{i=1} d_i x_i}{\sqrt{\sum^n_{i=1} (d_i-\bar d)^2}}\rightarrow N(0, 1)
    \]
    in distribution as $n\rightarrow\infty$.
\end{proposition}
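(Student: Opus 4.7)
The plan is to reduce the proposition to H\'ajek's central limit theorem for simple random sampling without replacement from a finite population. Setting $S_+ = \{i : x_i = 1\}$ and $m = |S_+|$, the constraint $|\bm x^\top \bm 1_n|\leq 1$ forces $m\in\{\lfloor n/2\rfloor, \lceil n/2\rceil\}$, and under any exchangeable random allocation subject to this constraint, $S_+$ is conditionally uniform over subsets of $\{1,\ldots,n\}$ of the given size. The key identity is
\[
\sum_{i=1}^n d_i x_i \;=\; 2\sum_{i\in S_+} d_i \;-\; \sum_{i=1}^n d_i,
\]
so the statistic of interest is an affine function of $T=\sum_{i\in S_+} d_i$, the sample total of a simple random sample of size $m$ drawn without replacement from the population $\{d_1,\ldots,d_n\}$.

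Next I would compute the variance of $T$ using the standard indicator covariances for sampling without replacement: with $Z_i=\bm 1\{i\in S_+\}$ we have $\mathrm{Var}(Z_i)=(m/n)(1-m/n)$ and $\mathrm{Cov}(Z_i,Z_j)=-m(n-m)/[n^2(n-1)]$ for $i\neq j$. A short reduction gives $\mathrm{Var}(T)=\frac{m(n-m)}{n(n-1)}\sum_{i=1}^n(d_i-\bar d)^2$, hence
\[
\mathrm{Var}\Bigl(\sum_{i=1}^n d_i x_i\Bigr)=\frac{4m(n-m)}{n(n-1)}\sum_{i=1}^n(d_i-\bar d)^2=(1+o(1))\sum_{i=1}^n(d_i-\bar d)^2,
\]
since $4m(n-m)/[n(n-1)]\to 1$. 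Therefore normalizing by $\sqrt{\sum_i(d_i-\bar d)^2}$ agrees asymptotically with normalizing by the true standard deviation, and by Slutsky it is enough to establish $(T-m\bar d)/\sqrt{\mathrm{Var}(T)}\to N(0,1)$.

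For the asymptotic normality of $T$ I would invoke H\'ajek's central limit theorem for simple random sampling without replacement. Its standard sufficient condition is the Noether/Lindeberg-type requirement $\max_i(d_i-\bar d)^2/\sum_i(d_i-\bar d)^2\to 0$ together with $m/n$ bounded away from $0$ and $1$; both hold by assumption, the first being exactly the hypothesis and the second because $m/n\to 1/2$. H\'ajek's theorem then delivers the required normal limit, and combining it with the two elementary identities above gives the conclusion.

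The main obstacle is the odd-$n$ case, where $m$ is itself random, taking the values $(n\pm 1)/2$ each with probability $1/2$. I would handle this by conditioning on $\bm x^\top \bm 1_n$: H\'ajek's theorem applies to each conditional law of $S_+$, the two conditional variances both equal $(1+o(1))\sum_i(d_i-\bar d)^2$, and mixing two asymptotically $N(0,1)$ conditional distributions preserves the standard normal limit. The only modeling input that requires explicit justification is the conditional uniformity of $S_+$, which follows from interpreting ``random allocation'' as uniform on $\{-1,1\}^n$ restricted to the level set $|\bm x^\top \bm 1_n|\leq 1$.
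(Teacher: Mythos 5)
Your argument is essentially the paper's own proof: the paper also reduces the statement to a finite-population CLT for simple random sampling without replacement applied to the degrees (citing Theorem~1 of Li and Ding (2017), whose Lindeberg--Noether condition is exactly the stated hypothesis), the only cosmetic differences being that it works with the group mean $\bar d_1$ plus a separate sign-randomization lemma where you work with the sample total $T$ and condition on $m=|S_+|$. One shared caveat: for odd $n$ your step ``mixing two asymptotically $N(0,1)$ conditional laws'' silently drops the conditional centering term $(2m-n)\bar d/\sqrt{\sum_{i=1}^n(d_i-\bar d)^2}=\pm\bar d/\sqrt{\sum_{i=1}^n(d_i-\bar d)^2}$, which need not vanish under the stated assumptions; the paper's proof discards the very same term, and it is genuinely zero only when $n$ is even.
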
 
The asymptotic distribution is used to compute the probability of more extreme cases compared to a given threshold value $c$
\begin{equation}\label{eq:prob}
\mathrm{P}\left(\left|\bm x^\top W\bm 1_n\right|\leq c\Bigg\rvert|\bm x^\top\bm 1_n|\leq 1\right)\approx 2\Phi\left(\frac{c}{\sqrt{\sum^n_{i=1} (d_i-\bar d)^2}}\right)-1
\end{equation}
where $\Phi(\cdot)$ is the CDF of the standard normal distribution. Once the network adjacency matrix $W$ is given, we are able to compute this probability. For example, by specifying $c=c(W)$, the above probability tells the possibility of a random design $\bm x$ satisfying $|\bm x^\top W \bm 1_n|\leq c(W)$ given that $|\bm x^\top \bm 1_n|\leq 1$. Also, given a specific design $\bm x_0$,  we can specify $c=\bm x^\top_0 W\bm 1_n$ to validate if this design can lead to a sufficiently small value.

For both scenarios, the value of $c$ can be chosen as a smaller quantile according to the distributions of the graph cut objective
$\bm x^\top W \bm x$ or $|\bm x^\top W \bm x|$ 
for a random design $\bm x$ generated from Algorithm \ref{alg:rand}. We first define some convenient notation and then provide the asymptotic distribution of $\bm x^\top W \bm x$. 
Let $\tilde W$ be the adjacency matrix reorder by $r_i$ in \eqref{eq:randomize} from the smallest to the largest. If $n$ is an odd number, $r_i=1$ is removed before reordering. 
Then the size of $\tilde W$ is even. 
We further define
\begin{equation}\label{eq:W0}
W_0=\left(\bm I \otimes [1,-1]\right)\tilde W \left(\bm I \otimes [1,-1]^\top\right)
\end{equation}
with $\bm I$ be the identity matrix with size 
$n/2$ or $(n-1)/2$ for $n$ being even or odd.

\begin{proposition} \label{prop:asy2}
Given $W_0$ in \eqref{eq:W0}, we assume that
\[
\frac{\mathrm{min}_{i=1,\ldots, n}d_i}{\sum_{i\neq j}w^2_{0,ij}}\rightarrow 0
\quad\mathrm{and}\quad
\frac{\lambda_{\mathrm{max}}(\tilde W_0)}{\sqrt{\sum_{i\neq j}w^2_{0,ij}}}\rightarrow 0
\quad\mathrm{as}\quad n\rightarrow\infty,
 \]
where $\tilde W_0$ is a matrix with off-diagonal entries equal to the corresponding entries of $W_0$ in \eqref{eq:W0}, and diagonal entries equal to 0 and $w_{0,ij}$ is the $i,j$-th entry
of $W_0$ in \eqref{eq:W0}. The notation $\lambda_{\mathrm{max}}(A)$ gives the maximum eigenvalue of a symmetric matrix $A$. 
Then we have that
\[
\frac{\bm x^\top W\bm x-\mathrm{trace}(W_0)}{2\sqrt{\sum_{i<j} w^2_{0,ij}}}\rightarrow N(0,1)
\]
in distribution as $n\rightarrow\infty$.
\end{proposition}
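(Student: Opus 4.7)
The plan is to reduce $\bm x^\top W\bm x$ to a homogeneous quadratic form in i.i.d.\ Rademacher variables and then invoke a standard central limit theorem for such forms.

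First, I would condition on the ranks $r_1,\ldots,r_n$ produced in Step 1 of Algorithm \ref{alg:rand}, which amounts to passing from $W$ to the reordered matrix $\tilde W$. In the even case, Step 3 of Algorithm \ref{alg:rand} assigns the pair at ranks $(2k-1,2k)$ the values $(\eta_k,-\eta_k)$ with $\eta_k$ i.i.d.\ Rademacher. Writing $\bm x=(\bm I\otimes[1,-1]^\top)\bm\eta$ in ranked order, the definition of $W_0$ in \eqref{eq:W0} gives the clean identity
\[
\bm x^\top W\bm x=\bm x^\top\tilde W\bm x=\bm\eta^\top W_0\bm\eta.
\]
In the odd case, the singleton at rank $1$ (which must attain $\min_i d_i$, since $u_i\in(0,1)$ cannot flip integer ranks of degrees) contributes a cross-term of absolute value at most $2\min_i d_i$; I would peel this off and treat it as an error term later.

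Next, compute moments. Using $\eta_k^2=1$ and $E[\eta_k\eta_l]=0$ for $k\ne l$ one gets $E[\bm\eta^\top W_0\bm\eta]=\sum_k w_{0,kk}=\mathrm{trace}(W_0)$, so the centered form is $\bm\eta^\top\tilde W_0\bm\eta=\sum_{k\ne l}w_{0,kl}\eta_k\eta_l$. By independence and symmetry of $\tilde W_0$,
\[
\mathrm{Var}\bigl(\bm\eta^\top\tilde W_0\bm\eta\bigr)=2\sum_{k\ne l}w_{0,kl}^2=4\sum_{k<l}w_{0,kl}^2,
\]
which matches the normalization in the statement.

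Asymptotic normality of $\bm\eta^\top\tilde W_0\bm\eta/(2\sqrt{\sum_{k<l}w_{0,kl}^2})$ then follows from de Jong's CLT for homogeneous quadratic forms in i.i.d.\ bounded variables with symmetric zero-diagonal coefficient matrix $A$: the theorem gives a standard normal limit whenever $\lambda_{\max}^2(A)/\mathrm{tr}(A^2)\to 0$. With $A=\tilde W_0$ one has $\mathrm{tr}(\tilde W_0^2)=\sum_{k\ne l}w_{0,kl}^2$, so the stated eigenvalue hypothesis is precisely de Jong's condition. The first hypothesis $\min_i d_i/\sum_{k\ne l}w_{0,kl}^2\to 0$ is then used to show the singleton cross-term in the odd case is $o_p$ of the normalizing constant (in particular it forces $\sum_{k\ne l}w_{0,kl}^2\to\infty$), after which Slutsky closes the argument.

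The main obstacle I anticipate is that $W_0$ is itself random, since it depends on the rank-based permutation induced by the $u_i$'s, so de Jong's theorem cannot be invoked in one shot on the full joint distribution. The natural workaround is to apply it conditionally on the ranking: because the stated hypotheses are phrased directly in terms of $W_0$, they can be read as pathwise conditions on the realized $\tilde W$, and a dominated-convergence/Slutsky argument then integrates the conditional limit to the unconditional one. A secondary nuisance is the careful bookkeeping for the odd case (isolating the singleton's sign $\zeta$ and showing its cross-contribution vanishes after standardization), but this is routine once the first hypothesis is in hand.
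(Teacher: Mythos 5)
Your proposal follows essentially the same route as the paper's proof: the same Kronecker-product identity reducing $\bm x^\top W\bm x$ to a quadratic form $\bm z^\top W_0\bm z$ in i.i.d.\ Rademacher variables, the same mean and variance computation, a quadratic-form CLT under the same spectral condition (the paper invokes the Berry--Esseen bound of G\"otze and Tikhomirov where you cite de Jong), and the same peeling off of the odd-$n$ singleton controlled by the $\min_i d_i$ hypothesis. Your explicit treatment of the conditioning on the random ranking is a subtlety the paper leaves implicit, but it does not alter the substance of the argument.
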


Under this proposition, we can set $c$ in Algorithm \ref{alg:design} based on the asymptotic normal  distribution 
\[\bm x^\top W\bm x\sim AN\left(\mathrm{trace}(W_0), 4\sum_{i<j} w^2_{0,ij}\right)
\]
For $\phi_1$, we set the stopping threshold as the $\alpha$-th quantile of $\bm x^\top W\bm x$, i.e.
\[
c=\mathrm{trace}(W_0)+2\sqrt{\sum_{i<j} w^2_{0,ij}}\Phi^{-1}(\alpha),
\]
where $\Phi^{-1}$ is the inverse function of the standard normal cumulative distribution function.  
For $\phi_2$, we set the stopping threshold as the $\alpha$-th quantile of $|\bm x^\top W\bm x|$, which asymptotically follows the folded normal distribution with parameters 
$\mathrm{trace}(W_0)$ and $4\sum_{i<j} w^2_{0,ij}$. 
Therefore, the process of rerandomization in Algorithm \ref{alg:design} constructs a Geometric distribution with the success probability approximated by $\alpha$. Then we can set the maximum number of randomization $T$ according to this distribution.  Throughout the numerical results of this paper, we set $T=5000$, $\alpha=0.005$ for Scenario I, and $\alpha=0.1$ for Scenario II.

\section{Numerical Study}\label{sec:num}

We evaluate the performances of the proposed design approach by computing two performance measures. We first generate 1000 random balanced designs satisfying that $\sum^n_{i=1}x_{i}=0$, where $n$ is set to be an even number in the numerical study for convenience. We compute the percentile of $\mathrm{Var}(\hat\beta)$ led by the proposed design approach among the variances from the 1000 random designs: 
\begin{equation}\label{eq:percentile}
\mathrm{Percentile}=\frac{\sum^{1000}_{i=1}I(v_i\leq v_{opt})}{1000},
\end{equation}
where $v_i$'s are the variances of $\hat\beta$ led by the 1000 random balanced designs, whereas $v_{opt}$ is that led by the proposed design approach.  We also compute the optimality gap of the proposed design with respect to the lower bounds (denoted by $v_{lb}$) given by \eqref{eq:ideal1} and \eqref{eq:ideal2}:
\begin{equation}\label{eq:gap}
\mathrm{Gap}=1-\frac{v_{lb}}{v_{opt}}.
\end{equation}
For comparison purposes, we compute the optimality gap of the median variance of the 1000 random balanced designs: $\mathrm{Gap_{median}}=1-v_{lb}/v_{median}$ with
$v_{median}$ be the median variance.

\subsection{Example I: Synthetic Networks}\label{sec:example1}

In this section, we evaluate the performances of the proposed approach using synthetic networks. Given the total number of vertexes $n$ and network density $p$, for $i<j$, $w_{ij}$'s are generated as iid Bernoulli random variables with the probability equal to one be $p$. We remove the isolated vertexes, so the actual size of the generated network can be smaller than $n$. For convenience of implementation, we force the resulting size of the network to be even. 

We first check the probability in \eqref{eq:prob} for the synthetic networks.
The results are depicted in Figure \ref{fg:prob}. In the left of Figure \ref{fg:prob}, we set $c$ in \eqref{eq:prob} be $c(W)$. The results show that the probability of the upper bound is at most 0.1 for networks with a size above 1000, but for networks with a size 100 and smaller density, the probability might be as high as 0.5. For networks with size 1000 or above, the guaranteed upper bound $c(W)$ for $\sum^n_{i=1}d_ix_i$ is small with respect to its asymptotic distribution.  In the right of Figure \ref{fg:prob}, we set $c$
in \eqref{eq:prob} be $\sum^n_{i=1}d_ix_i$  with 100 copies of the design $\bm x$ generated by Algorithm \ref{alg:rand}. The average probability values are reported. The actual values of $\sum^n_{i=1}d_ix_i$ can be significantly smaller than its upper bound $c(W)$ with small probability values compared to their asymptotic distributions.

\begin{figure}[!h]
\centering
\includegraphics[scale=0.75]{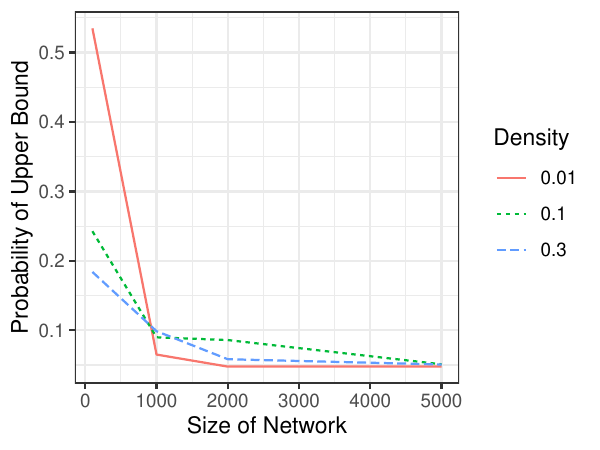}
\includegraphics[scale=0.75]{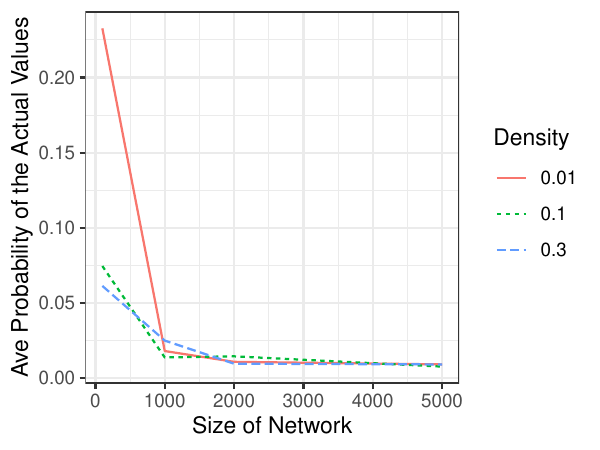}

\caption{The resulting probability in \eqref{eq:prob} with $c=c(W)$ (left, probability of upper bound) and and the average probability of $c=|\sum^n_{i=1}d_ix_i|$ (right, ave probability of the actual values) for 100 random designs generated by Algorithm \ref{alg:rand} with synthetic networks.}\label{fg:prob}
\end{figure}
 
Next, we compute the Percentile and Gap in \eqref{eq:percentile} and \eqref{eq:gap} for the designs generated by the proposed method. The results of Scenarios I and II are given in Table \ref{tb:s}. We vary the size of network $n$ and network density $p$, and generate ten networks under each setting. In the results, we provide the averages of Percentile and Gap over the ten networks under each setting. For comparison purposes, we also give the average value of $\mathrm{Gap}_\mathrm{median}$ in the tables.
For Scenario I, all the percentiles are below 0.004, which shows that the variance led by the proposed design is nearly optimal among the 1000 random designs. And the Gap values of the proposed design are all smaller than $\mathrm{Gap}_{\mathrm{median}}$. The advantage of the proposed design in terms of Gap becomes smaller for $n$ larger than 1000. For Scenario II, the percentiles of the proposed design approach are around 0.01 for smaller $n$ and drop to around 0.005 for larger $n$. The Gap values are smaller than in the case of Scenario I. For $n=1000$ and 2000, the Gap values to the ideal designs are nearly zero.

\begin{table}[h]
\centering
\caption{Averages of Percentile, Gap, and $\mathrm{Gap}_\mathrm{median}$ over ten generated networks under each setting}\label{tb:s}
\begin{tabular}{|c|rr|rrr|}
  \hline
 Scenario&$n$ & $p$ & Percentile & Gap & $\mathrm{Gap}_\mathrm{median}$ \\ \hline
 &50  & 0.1 & 0.0021 & 0.2471 & 0.3278 \\ 
 & & 0.3 & 0.0032 & 0.2836 & 0.3271 \\ 
 &100  & 0.1 & 0.0019 & 0.2853 & 0.3302 \\ 
 I & & 0.3 & 0.0027 & 0.3080 & 0.3299 \\ \cline{2-6}
 &1000  & 0.01 & 0.0019 & 0.3191 & 0.3331 \\ 
 & & 0.1 & 0.0024 & 0.3289 & 0.3330 \\ 
 &2000& 0.01 & 0.0028 & 0.3267 & 0.3332 \\ 
 && 0.1 & 0.0033 & 0.3311 & 0.3332 \\\hline\hline
&50  & 0.1 & 0.0120 & 0.0005 & 0.0448 \\ 
&  & 0.3 & 0.0145 & 0.0011 & 0.0466 \\ 
& 100  & 0.1 & 0.0153 & 0.0004 & 0.0211 \\ 
II & & 0.3 & 0.0080 & 0.0003 & 0.0228 \\ \cline{2-6}
& 1000  & 0.01 & 0.0046 & 0.0000 & 0.0020 \\ 
& & 0.1 & 0.0053 & 0.0000 & 0.0021 \\ 
& 2000  & 0.01 & 0.0037 & 0.0000 & 0.0010 \\ 
&& 0.1 & 0.0051 & 0.0000 & 0.0010 \\ 
   \hline
\end{tabular}
\end{table}

\subsection{Example II: Real Networks from Facebook}

We evaluate the performances of the proposed method using real networks from Facebook \citep{leskovec2012learning}.
This data contains ten sampled social networks collected from survey participants using Facebook app. After removing completely isolated users, the sizes of the ten networks range from 52 to 1034 and the network densities range from 0.034 to 0.150. Similar to Figure \ref{fg:prob}, we first check the probability of more extreme situations given by the upper bound $c(W)$ and the actual values of $|\sum^n_{i=1}d_ix_i|$ for 100 random designs from Algorithm \ref{alg:rand} in Figure \ref{fg:prob_case}. The figure shows that the resulting design given by this Algorithm can achieve smaller values of  $|\sum^n_{i=1}d_ix_i|$ to balance over the degrees of vertexes.

\begin{figure}[!h]
\centering
\includegraphics[scale=0.9]{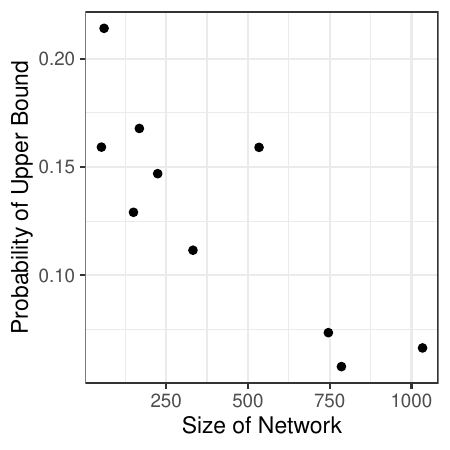}
\includegraphics[scale=0.9]{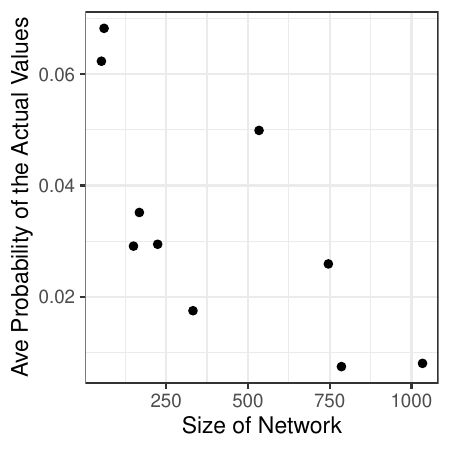}
\caption{The resulting probability in \eqref{eq:prob} with $c=c(W)$ (left, probability of upper bound) and the average probability of $c=|\sum^n_{i=1}d_ix_i|$ (right, ave probability of the actual values) for 100 random designs generated by Algorithm \ref{alg:rand} with the ten sampled networks from Facebook}\label{fg:prob_case}
\end{figure}

For both Scenarios, 
the percentile values of the ten networks are given in Figure \ref{fg:percent_case}, whereas the gap values are given in Figure \ref{fg:gap_case}. The percentiles of Scenario I are all below 0.0025, whereas the percentiles of Scenario II are all below 0.015, which shows that the proposed design leads to smaller variance than the variances from complete random designs. The comparison of Gap and $\mathrm{Gap}_\mathrm{median}$ is similar to the synthetic networks. 

\begin{figure}[!h]
\centering
\includegraphics[scale=0.9]{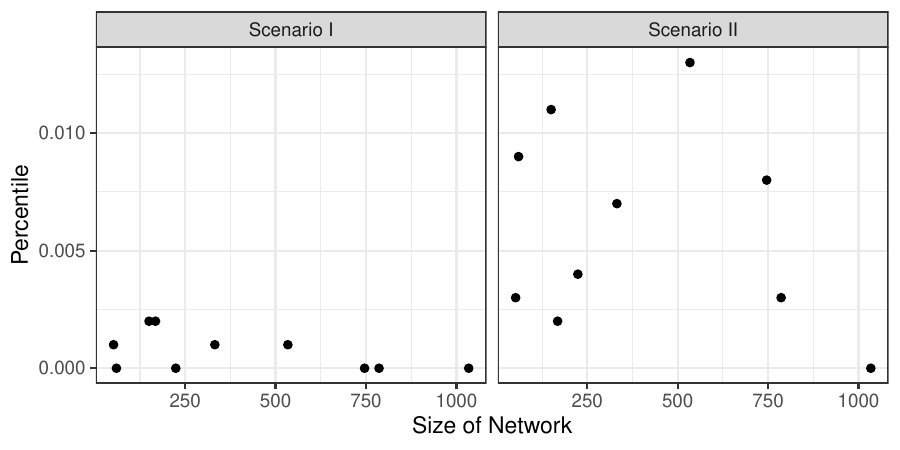}
\caption{The percentile in \eqref{eq:percentile} for the ten sampled networks from Facebook}\label{fg:percent_case}
\end{figure}

\begin{figure}[!h]
\centering
\includegraphics[scale=0.9]{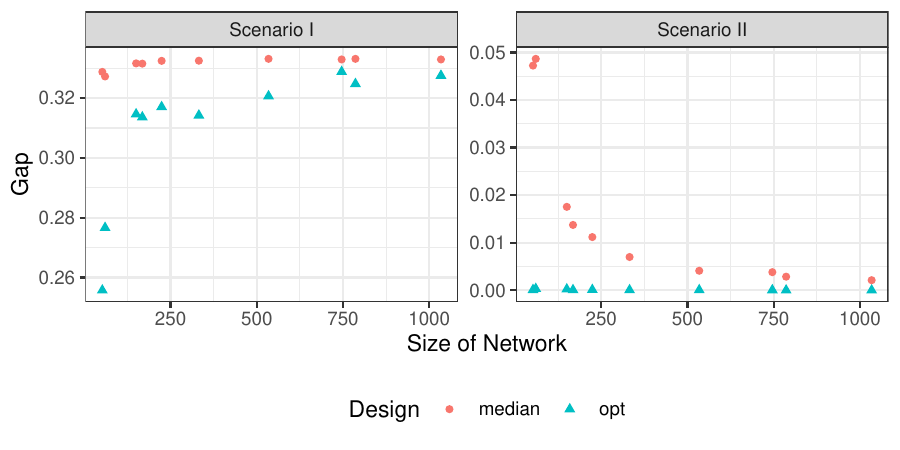}
\caption{The Gap (opt, blue triangles) and $\mathrm{Gap}_\mathrm{median}$ (median, red dots) in \eqref{eq:gap} for the ten sampled networks from Facebook}\label{fg:gap_case}
\end{figure}

\section{Conclusion}
This paper discovered the relationship between the design criteria of network A/B testing and graph cut objectives. We developed asymptotic distributions of two graph cut objectives to enable rerandomization algorithms to design network A/B testing. The numerical results show that the proposed algorithm effectively generates random designs under certain constraints to reduce the variance of parameter estimation from complete random designs. The proposed asymptotic results can also serve as stopping rules of other random algorithms for graph cut-related problems.

\appendix
\section*{Appendix: Proofs and Additional Numerical Validation}

\subsection*{Proof of Proposition \ref{prop:balance}}
Note that $d_{(i)}$'s are the ordered degrees based on the rank $r_i$'s in \eqref{eq:randomize}. Therefore, for $i<i'$, $d_{(i)}\leq d_{(i')}$.
If $n$ is an even number, after applying this algorithm, we have that
\[
\sum^n_{i=1} x_i d_i=\sum^n_{i=1} x_i d_{(i)}=\sum^{n/2}_{i=1}\left[d_{(2i)}x_{2i}-d_{(2i-1)}x_{2i-1}\right],
\]
where $x_{2i}$ and $x_{2i-1}$ are assigned to different treatments, 1 or -1. We can express
\[
\sum^n_{i=1} x_i d_i=\sum^{n/2}_{i=1}\left[d_{(2i)}-d_{(2i-1)}\right]z_i,
\]
with $z_i\in \{-1, 1\}$. Then,
we have that
\[
-\sum^{n/2}_{i=1}\left[d_{(2i)}-d_{(2i-1)}\right]\leq\bm x^\top W\bm 1_n\leq \sum^{n/2}_{i=1}\left[d_{(2i)}-d_{(2i-1)}\right]
\]
and
\[
\bm x^\top \bm 1_n=0.
\]
If $n$ is an odd number, we have that
\[
-d_{(1)}-\sum^{(n-1)/2}_{i=1}\left(d_{(2i+1)}-d_{(2i)}\right)\leq\bm x^\top W\bm 1_n\leq d_{(1)}+\sum^{(n-1)/2}_{i=1}\left(d_{(2i+1)}-d_{(2i)}\right)
\]
and
\[
-1\leq \bm x^\top \bm 1_n\leq 1.
\]
We denote
\[
c(W)=\begin{cases}
\sum^{n/2}_{i=1}\left(d_{(2i+1)}-d_{(2i)}\right) & \mathrm{if}\quad n\quad\mathrm{is~even}\\
d_{(1)}+\sum^{(n-1)/2}_{i=1}\left(d_{(2i+1)}-d_{(2i)}\right)
& \mathrm{if}\quad n\quad\mathrm{is~odd}
\end{cases},
\]
then the conclusion holds.

\subsection*{Proof of Proposition \ref{prop:asy1}}
Note that the asymptotic results for $\sum^n_{i=1} d_i x_i$
is described in \cite{pokhilko2019statistical} under random allocations without $|\bm x^\top \bm 1_n|\leq 1$ constraint.
Under $|\bm x^\top \bm 1_n|\leq 1$, we derive the asymptotic distribution based on finite population asympototics \citep{li2017general}. We first state a Lemma to support the proof of Proposition 2.

\begin{lemma}\label{lemma}
Let $z_n$ be a sequence of random variables that converge in distribution to the standard normal distribution. Let $s$ be a binary random variable with 
\[
\mathrm{P}(s=1)=\mathrm{P}(s=-1)=\frac{1}{2},
\]
and $s$ is independent with $z_n$.
Then $g_n=sz_n$ converges in distribution to the standard.
\end{lemma}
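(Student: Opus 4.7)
The plan is to exploit the symmetry of the standard normal distribution, namely that $-Z \stackrel{d}{=} Z$ when $Z \sim N(0,1)$, together with the independence of $s$ and $z_n$. I would work directly with either the cumulative distribution function or the characteristic function; the latter tends to be cleaner because it sidesteps any issue of continuity points at zero.

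Using characteristic functions, I would first condition on $s$. Since $s$ is independent of $z_n$ and takes values $\pm 1$ each with probability $1/2$, I would write
\[
\E\bigl[e^{i t g_n}\bigr] = \E\bigl[e^{i t s z_n}\bigr] = \tfrac{1}{2}\,\E\bigl[e^{i t z_n}\bigr] + \tfrac{1}{2}\,\E\bigl[e^{-i t z_n}\bigr].
\]
By the assumed convergence in distribution of $z_n$ to $N(0,1)$ and L\'evy's continuity theorem, both $\E[e^{i t z_n}]$ and $\E[e^{-i t z_n}]$ converge pointwise in $t$ to $e^{-t^2/2}$. Adding the two limits yields $e^{-t^2/2}$, which is the characteristic function of the standard normal, so another application of L\'evy's continuity theorem delivers $g_n \rightarrow N(0,1)$ in distribution.

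As a cross-check I would also sketch the CDF argument: for any fixed $t \in \Re$,
\[
\Pr(g_n \leq t) = \tfrac{1}{2}\Pr(z_n \leq t) + \tfrac{1}{2}\Pr(-z_n \leq t) = \tfrac{1}{2} F_n(t) + \tfrac{1}{2}\bigl[1 - \Pr(z_n < -t)\bigr],
\]
and pass to the limit using continuity of $\Phi$ and the identity $1 - \Phi(-t) = \Phi(t)$ to recover $\Phi(t)$.

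I do not expect a genuine obstacle here; the only subtlety worth flagging is that the argument really does require $s$ to be independent of $z_n$ (otherwise the conditioning step fails) and requires the limit to be symmetric about zero (otherwise the two halves would not combine back to the same limit law). Both conditions are present, so the proof should be short.
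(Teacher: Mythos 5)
Your proposal is correct. Your secondary ``cross-check'' via CDFs is essentially the paper's own proof: the paper conditions on $s$ to write $\Pr(sz_n\leq t)=\tfrac{1}{2}[\Pr(z_n\leq t)+\Pr(z_n\geq -t)]$ and passes to the limit using $1-\Phi(-t)=\Phi(t)$, exactly as you sketch. Your primary argument via characteristic functions is a genuinely different route: conditioning on $s$ inside the expectation gives $\E[e^{itg_n}]=\tfrac{1}{2}\E[e^{itz_n}]+\tfrac{1}{2}\E[e^{-itz_n}]$, and two applications of L\'evy's continuity theorem finish the job. The characteristic-function version is slightly more robust --- it avoids any discussion of continuity points of the limiting CDF (harmless here since $\Phi$ is continuous everywhere, but a point the paper glosses over) and generalizes immediately to any symmetric limit law. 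The CDF version is more elementary and self-contained. Your remark that the argument genuinely needs both the independence of $s$ from $z_n$ and the symmetry of the limit is a correct and worthwhile observation; the paper does not make it explicit. One trivial note: both you and the paper could state that the same identity shows $-z_n\to N(0,1)$, which is the form in which the symmetry is actually consumed.
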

\begin{proof}
Let $\Phi(t)$ be the cumulative distribution function (CDF) of $\mathcal N(0,1)$. The CDF of $g_n$ is
\begin{align*}
\mathrm{P}(sz_n\leq t)
&=\mathrm{P}(sz_n\leq t|s=1)\mathrm{P}(s=1)
+\mathrm{P}(sz_n\leq t|s=-1)\mathrm{P}(s=-1)\\
&=\frac{1}{2}\left[\mathrm{P}(z_n\leq t)+\mathrm{P}(z_{n}\geq -t)\right]
\end{align*}
Since $z_n$ converges in distribution to $N(0,1)$,
we have that
\[
\mathrm{P}(z_n\leq t)\rightarrow \Phi(t)\quad\mathrm{and}\quad 
\mathrm{P}(z_{n}\geq -t)\rightarrow 1-\Phi(-t)=\Phi(t)
\]
as $n\rightarrow \infty$. Accordingly, 
$\mathrm{P}(z_{n}\leq t)\rightarrow \Phi(t)$
as $n\rightarrow \infty$. The conclusion holds. 
\end{proof}\\

We now state proof of Proposition \ref{prop:asy1}. We first assume that the $n$ vertexes have been randomly split into two balanced groups with sizes $n_1$ and $n_2$. Without loss of generality, we assume that $n_1=n_2=n/2$ if $n$ is even, whereas $n_1=n_2+1=(n+1)/2$ if $n$ is odd. Let $\bar d_1$ and $\bar d_2$ be the average degrees of the vertexes in the first and second groups, respectively. Then, according to Theorem 1 in \cite{li2017general}, we have that
\[
\frac{\bar d_1-\bar d}{\sqrt{\mathrm{Var}(\bar d_1)}}\rightarrow N(0,1) 
\]
in distribution as $n\rightarrow\infty$
under the assumptions given in this proposition. 
Note that $\mathrm{Var}(\bar d_1)=(n^{-1}_1-n^{-1})(n-1)^{-1}\sum^n_{i=1}(d_i-\bar d)^2$.
Therefore, we have that
\[
\frac{n_1\bar d_1-n_2\bar d_2}{\sqrt{\sum^n_{i=1}(d_i-\bar d)^2}}
=\frac{2n_1\bar d_1-n\bar d}{\sqrt{\sum^n_{i=1}(d_i-\bar d)^2}}=\frac{2n_1}{\sqrt{(n-1)n_1n/n_2}}\frac{\bar d_1-\bar d}{\sqrt{\mathrm{Var}(\bar d_1)}}+\frac{2n_1-n}{\sqrt{(n-1)n_1n/n_2}}\frac{\bar d}{\sqrt{\mathrm{Var}(\bar d_1)}}
\]
As $n\rightarrow\infty$, 
\[
\frac{2n_1}{\sqrt{(n-1)n_1n/n_2}}\rightarrow 1\quad\mathrm{and}\quad \frac{2n_1-n}{\sqrt{(n-1)n_1n/n_2}}\rightarrow 0
\]
Therefore, 
\[
\frac{n_1\bar d_1-n_2\bar d_2}{\sqrt{\sum^n_{i=1}(d_i-\bar d)^2}}\rightarrow N(0, 1)
\]
in distribution as $n\rightarrow\infty$.
A random allocation of the design vector $\bm x$ with the balanced constraint can be equivalently 
implemented by a random split with fixed size $n_1$ and $n_2$ and then randomize 1 and -1 over the two groups. Therefore, the conclusion holds
according to Lemma \ref{lemma}.


\subsection*{Proof of Proposition \ref{prop:asy2}}

 Notice that the entries in $\bm x$ from  Algorithm \ref{alg:rand} are dependent. Then, it is not straightforward to develop the distribution of $\bm x^\top W \bm x$ directly. 
To investigate this, we first provide an alternative expression of the objective $\bm x^\top W\bm x$ to simplify the derivation of the asymptotic distribution of $\bm x^\top W\bm x$ in the following proposition for the case with an even value of $n$.

\begin{lemma} 
Assume that $n$ is even. 
Let $\bm z=(z_1, \ldots, z_{n/2})^\top$ with  $z_{i}\in \{-1,1\}$ for $i=1, \ldots, n/2$.  Denote $\bm x=\bm z\otimes [1, -1]^\top$. 
We have that
\[
\bm x^\top W\bm x=\bm z^\top\left( \bm I_{n/2}\otimes [1,-1]\right)W \left(\bm I_{n/2}\otimes [1,-1]^\top \right)\bm z.
\]
\end{lemma}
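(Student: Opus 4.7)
The plan is a one-step computation using the mixed-product property of the Kronecker product, $(A \otimes B)(C \otimes D) = (AC) \otimes (BD)$. First, I would rewrite the definition $\bm x = \bm z \otimes [1,-1]^\top$ by taking $A = \bm I_{n/2}$, $C = \bm z$, $B = [1,-1]^\top$, $D = 1$, which gives
\[
\bm x \;=\; (\bm I_{n/2}\bm z) \otimes ([1,-1]^\top \cdot 1) \;=\; \bigl(\bm I_{n/2} \otimes [1,-1]^\top\bigr)\,\bm z.
\]
This exhibits $\bm x$ as a linear image of $\bm z$ through an $n \times (n/2)$ matrix whose action is precisely to pair each $z_i$ with $-z_i$, consistent with the ranked vertex pairing used in Algorithm \ref{alg:rand}.

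Second, I would transpose using $(A \otimes B)^\top = A^\top \otimes B^\top$, obtaining $\bm x^\top = \bm z^\top\bigl(\bm I_{n/2} \otimes [1,-1]\bigr)$, and substitute both expressions directly into $\bm x^\top W \bm x$. The result is exactly the asserted identity; no further algebraic manipulation is required.

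The main obstacle, such as it is, is purely bookkeeping: one must verify dimensional consistency, namely that $\bm I_{n/2} \otimes [1,-1]^\top$ is an $n \times (n/2)$ matrix, so that the sandwiched object $(\bm I_{n/2} \otimes [1,-1])\,W\,(\bm I_{n/2} \otimes [1,-1]^\top)$ is a well-defined $(n/2) \times (n/2)$ symmetric matrix and the quadratic form in $\bm z$ makes sense. Once dimensions are checked, the identity follows immediately from the Kronecker algebra above, so the lemma contains no serious analytic content and should serve only as an algebraic reformulation that will later be exploited to derive the asymptotic distribution in Proposition \ref{prop:asy2}.
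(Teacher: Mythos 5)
Your proposal is correct and is essentially the same argument as the paper's, which simply observes that $\left(\bm I\otimes [1,-1]^\top\right)\bm z=\bm x$ and declares the conclusion obvious; you merely make the mixed-product justification and the dimension check explicit. No gap.
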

The conclusion of this Lemma is obvious, since
$\left(\bm I\otimes [1,-1]^\top \right)\bm z=\bm x$.

For a random design $\bm x$ given by Algorithm \ref{alg:rand}, we reordered by $r_i$s in \eqref{eq:randomize} from the smallest to the largest. 
Denote $\tilde{\bm x}$ be the reordered random design. 
According to the definition of $\tilde W$ and $W_0$ in \eqref{eq:W0} and Algorithm \ref{alg:rand}, 
we have that 
\[
\bm x^\top W\bm x=\tilde{\bm x}^\top \tilde W\tilde{\bm x}=\bm z^\top W_0\bm z,
\]
where $\bm z\in \{-1, 1\}^{n/2}$ is a random vector representing the independent random shuffle of $\{-1, 1\}$ within each group in Step 3 of Algorithm \ref{alg:rand}. Therefore, we can alternatively evaluate the asymptotic distribution of $\bm z^\top W_0\bm z$ with the entries of $\bm z$ being iid random variable. 

We consider $\bm z=(z_1, \ldots, z_{n/2})^\top$ be independent and identically distributed from the distribution with $\mathbb{P}(z_i=1)=\mathbb{P}(z_i=-1)=0.5$. Then
\[
\mathbb{E}(z_i)=0, \quad \mathrm{Var}(z_i)=1, \quad \mathbb{E}|z_i|^3=1. 
\]
Let $A$ be a matrix with $ij$-th element $ a_{ij}=\frac{w_{0,ij}}{\sqrt{\sum_{i\neq j} w^2_{0,ij}}}$ if $i\neq j$, and $a_{ii}=0$ for $i=1,\ldots, n/2$. We have that
\[
\sum_{ij}a^2_{ij}=\frac{\sum_{i\neq j} w^2_{0,ij}}{\sum_{i\neq j} w^2_{0,ij}}=1
\]
According to Theorem 1 in \cite{gotze1999asymptotic}, we have that
\[
\sup_u\left\lvert \mathbb P\left(\frac{\bm z^\top A\bm z}{\sqrt{\mathrm{Var}(\bm z^\top A\bm z})}\leq u\right)-\Phi(u)\right\rvert
\leq C|\lambda_{\mathrm{max}}(A)|,
\]
for some constant $C$. Since 
\[
\lambda_{\mathrm{max}}(A)=\frac{\lambda_{\mathrm{max}}(\tilde W_0)}{\sqrt{\sum_{i\neq j}w^2_{0,ij}}},
\]
 we have that
\[
\frac{\bm x^\top W\bm x-\mathrm{E}\left(\bm z^\top W_0\bm z\right)}{\sqrt{\mathrm{Var}(\bm z^\top W_0\bm z})}=\frac{\bm z^\top W_0\bm z-\mathrm{E}\left(\bm z^\top W_0\bm z\right)}{\sqrt{\mathrm{Var}(\bm z^\top W_0\bm z})}=\frac{\bm z^\top \tilde W_0\bm z}{\sqrt{\mathrm{Var}(\bm z^\top \tilde W_0\bm z})}=\frac{\bm z^\top A\bm z}{\sqrt{\mathrm{Var}(\bm z^\top A\bm z})}\rightarrow N(0,1)
\]
in distribution as $n\rightarrow\infty$.
We see that 
\[
\mathrm{E}\left(\bm z^\top W_0\bm z\right)=\mathrm{trace}(W_0)
\]
and
\[
\mathrm{Var}(\bm z^\top W_0\bm z)=4\sum_{i<j} w^2_{0,ij}
\]
which concludes the case for $n$ being even.

If $n$ is an odd number, the difference between the original objective and the alternative expression comes from the vertex with the minimum degree. 
We have that
\[
\bm x^\top W \bm x=\tilde{\bm x}^\top \tilde W\tilde{\bm x}+2x_{i_1}\sum_{j\neq i_1} w_{i_1, j}x_{j}=\bm z^\top W_0\bm z+2x_{i_1}\sum^n_{j=1} w_{i_1, j}x_{j},
\]
where $i_1$ is the index of the vertex with the smallest $\tilde d_i$ and $x_{i_1}$ is randomly allocated with 1 or -1 independent with $x_j$ for $j\neq i_1$. Let $d_{(1)}=\mathrm{min}_{i=1, \ldots, n}d_i$. Note that
\[
d_{(1)}\left(1-\frac{d_{(1)}}{n-1}\right)\leq\mathrm{Var}\left(x_{i_1}\sum^n_{j=1} w_{i_1, j}x_{j}\right)\leq \sum^n_{j=1}w_{i_1, j}=d_{(1)}.
\]
Then if $d_{(1)}/{\sum_{i<j} w^2_{0,ij}}\rightarrow 0$ holds
\[
\frac{ x_{i_1}\sum^n_{j=1} w_{i_1, j}x_{j}}{\sqrt{\sum_{i<j} w^2_{0,ij}}}\rightarrow 0\quad\mathrm{as}\quad n\rightarrow\infty
\]
in probability. We also have the conclusion holds for $n$ being odd. This concludes the proof. 

\subsection*{Numerical Validation of the Assumptions in Proposition \ref{prop:asy2}}

In this section, we provide numerical validation for the assumptions in Proposition \ref{prop:asy2}. We generate random networks as in Section  \ref{sec:example1}. We vary the values of $n$ (size of network) from 100 to 5000, and the values of network density as 0.01, 0.1, and 0.3. Left of Figure \ref{fg:convergence} shows how $\mathrm{min}_{i=1, \ldots, n}d_i/\sum_{i\neq j}w^2_{0, ij}$ change with the size of network and network density, whereas right of Figure \ref{fg:convergence} shows how $\lambda_{max}(\tilde W_0)/\sqrt{\sum_{i\neq j}w^2_{0, ij}}$ change with the size of network and network density. The trends in the Figures show evidences of the convergences of those values for randomly generated networks as in Section \ref{sec:example1}.

\begin{figure}[!h]
\centering
\includegraphics[scale=0.75]{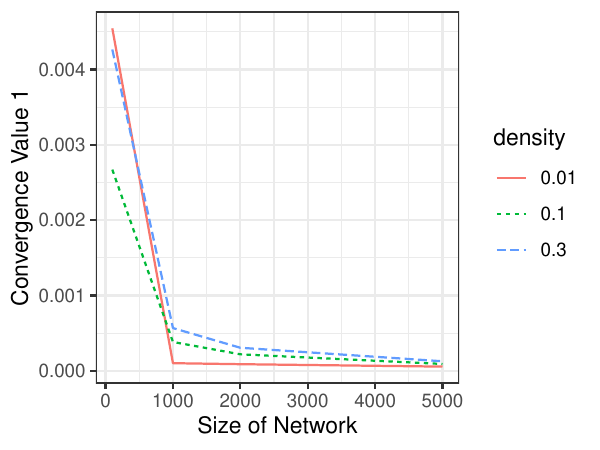}
\includegraphics[scale=0.75]{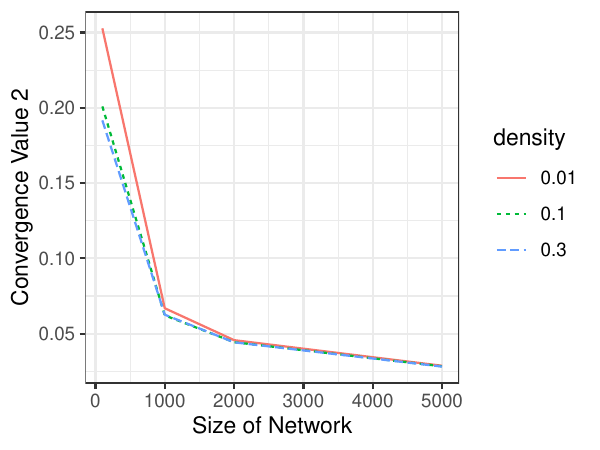}

\caption{The probability of the constraint in \eqref{eq:prob} for generated networks with different sizes and network densities.}\label{fg:convergence}
\end{figure}


\begin{thebibliography}{15}
\newcommand{\enquote}[1]{``#1''}
\expandafter\ifx\csname natexlab\endcsname\relax\def\natexlab#1{#1}\fi

\bibitem[{Basse and Airoldi(2018)}]{basse2018model}
Basse, G.~W. and Airoldi, E.~M. (2018), \enquote{Model-assisted design of
  experiments in the presence of network-correlated outcomes,}
  \textit{Biometrika}, 105, 849--858.

\bibitem[{Ben-Tal and Nemirovski(2001)}]{ben2001lectures}
Ben-Tal, A. and Nemirovski, A. (2001), \textit{Lectures on modern convex
  optimization: analysis, algorithms, and engineering applications}, SIAM.

\bibitem[{Besag(1974)}]{besag1974spatial}
Besag, J. (1974), \enquote{Spatial interaction and the statistical analysis of
  lattice systems,} \textit{Journal of the Royal Statistical Society: Series B
  (Methodological)}, 36, 192--225.

\bibitem[{Gotze and Tikhomirov(1999)}]{gotze1999asymptotic}
Gotze, F. and Tikhomirov, A.~N. (1999), \enquote{Asymptotic distribution of
  quadratic forms,} \textit{Annals of probability}, 1072--1098.

\bibitem[{Gross and Yellen(2005)}]{gross2005graph}
Gross, J.~L. and Yellen, J. (2005), \textit{Graph theory and its applications},
  CRC press.

\bibitem[{Gui et~al.(2015)Gui, Xu, Bhasin, and Han}]{gui2015network}
Gui, H., Xu, Y., Bhasin, A., and Han, J. (2015), \enquote{Network a/b testing:
  From sampling to estimation,} in \textit{Proceedings of the 24th
  International Conference on World Wide Web}, pp. 399--409.

\bibitem[{Larsen et~al.(2023)Larsen, Stallrich, Sengupta, Deng, Kohavi, and
  Stevens}]{larsen2023statistical}
Larsen, N., Stallrich, J., Sengupta, S., Deng, A., Kohavi, R., and Stevens,
  N.~T. (2023), \enquote{Statistical challenges in online controlled
  experiments: A review of a/b testing methodology,} \textit{The American
  Statistician}, 1--32.

\bibitem[{Lawler(2001)}]{lawler2001combinatorial}
Lawler, E.~L. (2001), \textit{Combinatorial optimization: networks and
  matroids}, Courier Corporation.

\bibitem[{Leskovec and Mcauley(2012)}]{leskovec2012learning}
Leskovec, J. and Mcauley, J. (2012), \enquote{Learning to discover social
  circles in ego networks,} \textit{Advances in neural information processing
  systems}, 25.

\bibitem[{Li and Ding(2017)}]{li2017general}
Li, X. and Ding, P. (2017), \enquote{General forms of finite population central
  limit theorems with applications to causal inference,} \textit{Journal of the
  American Statistical Association}, 112, 1759--1769.

\bibitem[{Morgan and Rubin(2012)}]{morgan2012rerandomization}
Morgan, K.~L. and Rubin, D.~B. (2012), \enquote{Rerandomization to improve
  covariate balance in experiments,} \textit{The Annals of Statistics}, 40,
  1263--1282.

\bibitem[{Parker et~al.(2017)Parker, Gilmour, and
  Schormans}]{parker2017optimal}
Parker, B.~M., Gilmour, S.~G., and Schormans, J. (2017), \enquote{Optimal
  design of experiments on connected units with application to social
  networks,} \textit{Journal of the Royal Statistical Society. Series C
  (Applied Statistics)}, 455--480.

\bibitem[{Pokhilko et~al.(2019)Pokhilko, Zhang, Kang, and
  Mays}]{pokhilko2019doptimal}
Pokhilko, V., Zhang, Q., Kang, L., and Mays, D. (2019), \enquote{D-optimal
  Design for Network A/B Testing,} \textit{Journal of Statistical Theory and
  Practice}, 13.

\bibitem[{Pokhilko(2019)}]{pokhilko2019statistical}
Pokhilko, V.~V. (2019), \enquote{Statistical Designs for Network A/B Testing,}
  Ph.D. thesis, Virginia Commonwealth University.

\bibitem[{Zhang and Kang(2022)}]{zhang2022locally}
Zhang, Q. and Kang, L. (2022), \enquote{Locally Optimal Design for A/B Tests in
  the Presence of Covariates and Network Dependence,} \textit{Technometrics},
  64, 358--369.

\end{thebibliography}

\end{document}